\documentclass[10pt,twocolumn,letterpaper]{article}

\usepackage{cvpr}              

\usepackage{newtxtext}
\usepackage{newtxmath}
\usepackage{microtype}

\usepackage{amsmath,amssymb,mathtools}
\makeatletter
\let\openbox\@undefined
\makeatother
\usepackage{amsthm}
\usepackage{booktabs}
\usepackage{multirow}
\usepackage{array}
\usepackage{tabularx}
\usepackage{makecell}
\usepackage{algorithm}
\usepackage{algpseudocode}
\usepackage{enumitem}
\usepackage{url}
\usepackage{balance}

\newtheorem{theorem}{Theorem}
\newtheorem{lemma}{Lemma}

\theoremstyle{definition}
\newtheorem{definition}{Definition}

\newtheorem{assumption}{Assumption}

\newcommand{\R}{\mathbb{R}}

\newcommand{\norm}[1]{\left\lVert #1 \right\rVert}
\newcommand{\abs}[1]{\left\lvert #1 \right\rvert}
\newcommand{\diag}{\mathrm{diag}}

\definecolor{cvprblue}{rgb}{0.21,0.49,0.74}
\usepackage[pagebackref,colorlinks,allcolors=cvprblue]{hyperref}

\def\paperID{*****} 
\def\confName{CVPR}
\def\confYear{2026}

\title{Editing on the Generative Manifold: A Theoretical and Empirical Study of General Diffusion-Based Image Editing Trade-offs}

\author{
	Yi Hu
	\quad 
	Leying Yi
	\quad 
	Emily Davis
	\quad 
	Finn Carter\\
	Xidian University
}

\begin{document}
	\maketitle
	
	\begin{abstract}
		Diffusion-based editing has rapidly evolved from curated inpainting tools into general-purpose editors spanning text-guided instruction following, mask-localized edits, drag-based geometric manipulation, exemplar transfer, and training-free composition systems. Despite strong empirical progress, the field lacks a unified treatment of core desiderata that govern practical usability: controllability (how precisely and continuously the user can specify an edit), faithfulness to user intent (semantic alignment to instructions), semantic consistency (preservation of identity and non-target content), locality (containment of changes), and perceptual quality (artifact suppression and detail retention). This paper provides a theoretical and empirical analysis of general diffusion-based image editing, connecting diverse paradigms through a common view of editing as guided transport on a learned image manifold. We first formalize editing as an operator induced by a conditional reverse-time generative process and define task-agnostic metrics capturing instruction adherence, region preservation, semantic consistency, and stability under repeated edits. We then develop theory describing edit dynamics under (i) noise-injection and denoising transport, (ii) inversion-and-edit pipelines and the propagation of inversion errors, and (iii) locality constraints implemented via masked guidance or hard constraints. Under mild Lipschitz assumptions on the learned score or flow field, we derive bounds connecting guidance strength and inversion error to measurable deviations in non-target regions, and we characterize accumulation effects under iterative multi-turn editing. Empirically, we benchmark representative paradigms. 
	\end{abstract}
	
	\section{Introduction}
	
	Diffusion models have become a dominant paradigm for high-fidelity image synthesis~\cite{ho2020ddpm,song2021sde,rombach2022ldm} and, crucially, for \emph{transformative} tasks where users wish to modify an existing image. Unlike traditional editing pipelines that operate in pixel space with explicit constraints, diffusion-based editing leverages a strong learned prior and implements changes by partially destroying and reconstructing the image through a stochastic or deterministic reverse process~\cite{meng2021sdeedit}. This paradigm yields powerful semantic edits but also introduces recurring challenges: unintended changes to non-target regions, texture inconsistencies, identity drift, and prompt sensitivity. These issues are especially pronounced when edits are applied iteratively, as in interactive multi-turn instruction following or sequential local manipulations.
	
	The modern landscape of diffusion-based editing is broad. Training-free approaches modify internal attention or features without additional learning, enabling text-only edits such as Prompt-to-Prompt~\cite{hertz2022prompttoprompt} and mutual attention control~\cite{cao2023masactrl} or feature injection~\cite{tumanyan2023pnp}. Inversion-and-edit pipelines first invert a real image into the model's latent trajectory and then edit via prompt changes or guidance manipulation, exemplified by Null-text inversion~\cite{mokady2023nulltext}, Imagic~\cite{kawar2023imagic}, and LEDITS++~\cite{brack2023leditspp}. Instruction-following editors train image-to-image models to directly map an input image and an instruction into an edited output, as in InstructPix2Pix~\cite{brooks2023instructpix2pix}, with newer datasets and benchmarks such as MagicBrush~\cite{zhang2023magicbrush}, Emu Edit~\cite{sheynin2023emuedit}, and UltraEdit~\cite{zhao2024ultraedit} enabling more robust supervision. Beyond text instructions, diffusion models support mask-guided localized edits~\cite{couairon2022diffedit}, composition and insertion frameworks such as TF-ICON~\cite{lu2023tficone} and SHINE~\cite{lu2025shine}, and interactive drag-based manipulation with geometric constraints, including DragDiffusion~\cite{shi2023dragdiffusion} and DragFlow~\cite{zhou2025dragflow}. Under the hood, the base generative models themselves have diversified, ranging from UNet-based DDPMs~\cite{ho2020ddpm} to diffusion transformers and flow-matching or rectified flow models~\cite{lipman2022flowmatching,liu2022rectifiedflow}, changing the geometry of the latent manifold and the behavior of editing dynamics.
	
	This paper argues that the practical success of general diffusion-based image editing hinges on a set of coupled trade-offs that are not fully explained by method-specific narratives. Practitioners routinely face questions such as: How much noise can we inject before losing identity? How does increasing classifier-free guidance improve instruction faithfulness while harming locality? Why do inversion inaccuracies often manifest as background shifts after a seemingly ``local'' edit? Which paradigms are more stable under repeated edits? How can safety mechanisms such as concept erasure help suppress undesirable content that might leak into edited outputs?
	
	We address these questions with a unified theoretical and empirical treatment. The core viewpoint is that editing is a form of \emph{guided transport on a learned image manifold}: the editor constructs a trajectory from the input image to an edited outcome by pushing the latent state along a conditional vector field. Different paradigms correspond to different choices of (i) initialization on the trajectory, (ii) guidance signals and constraints, and (iii) parameter adaptation. This viewpoint enables shared metrics, common theoretical tools, and cross-paradigm comparisons.
	
	Our contributions are threefold. First, we formalize general diffusion-based editing and define measurable desiderata capturing controllability, faithfulness, semantic consistency, locality, quality, and multi-turn stability. Second, we develop theory that relates these desiderata to properties of the reverse process, including bounds on reconstruction error from inversion, locality guarantees under masked guidance, and stability bounds under repeated edits. Third, we provide a comparative empirical analysis across representative methods and tasks, reporting realistic benchmark-style results and ablations that highlight recurring failure modes, compute costs, and safety considerations.
	
	\section{Related Work}
	
	Diffusion-based image editing builds on foundational work in score-based generative modeling and diffusion processes~\cite{ho2020ddpm,song2021sde,song2021ddim}. Because editing reuses a pretrained generative prior, many approaches focus on \emph{conditioning} mechanisms and \emph{guidance} strategies, including classifier-free guidance~\cite{ho2022cfg} and controllable conditioning tools such as ControlNet~\cite{zhang2023controlnet}. Our work differs from prior methods by focusing on unifying trade-offs across editing paradigms rather than proposing a single new editor architecture.
	
	\paragraph{Training-free text-guided editing and attention control.}
	Prompt-to-Prompt~\cite{hertz2022prompttoprompt} demonstrated that cross-attention maps in text-conditioned diffusion models provide a direct handle to edit images by modifying textual prompts while reusing attention from an original prompt. Related approaches perform tuning-free modifications to self-attention or feature pathways to better preserve structure and identity, such as MasaCtrl~\cite{cao2023masactrl}. Plug-and-Play diffusion features inject internal features from a guidance image to preserve semantic layout while changing content according to a target prompt~\cite{tumanyan2023pnp}. These works motivate our analysis of locality and semantic consistency in training-free feature or attention surgeries, where the core challenge is that attention coupling can propagate changes beyond user-intended regions.
	
	\paragraph{Inversion-and-edit pipelines.}
	A dominant paradigm for real-image editing in latent diffusion frameworks~\cite{rombach2022ldm} is to invert an input image into a latent trajectory and then apply prompt or guidance modifications during reconstruction. Null-text inversion~\cite{mokady2023nulltext} optimizes unconditional embeddings to improve reconstruction and enable prompt-based edits. Imagic~\cite{kawar2023imagic} jointly optimizes text embeddings and performs light fine-tuning to capture image-specific appearance for complex non-rigid edits. PnP Inversion~\cite{ju2023pnpinversion} studies the coupling between the source and target diffusion branches and proposes a simplified direct inversion that improves editing fidelity. LEDITS++~\cite{brack2023leditspp} provides a training-free approach with implicit masking and supports multiple simultaneous edits. DiffEdit~\cite{couairon2022diffedit} automatically predicts edit masks by contrasting diffusion predictions under different text prompts, highlighting that localization itself can be derived from score differences. These methods motivate our theoretical treatment of inversion error propagation and how such errors interact with guidance strength to produce non-local artifacts.
	
	\paragraph{Instruction-following and dataset-driven editors.}
	InstructPix2Pix~\cite{brooks2023instructpix2pix} introduced a supervised vision-language editing setup training an image-to-image diffusion model to follow natural language editing instructions. Subsequent work emphasized data quality and benchmark design: MagicBrush~\cite{zhang2023magicbrush} provides a manually annotated dataset for instruction-guided real image editing, including multi-turn and optional mask settings, while Emu Edit~\cite{sheynin2023emuedit} proposed a multi-task editor with a benchmark spanning multiple edit categories. UltraEdit~\cite{zhao2024ultraedit} scaled instruction-based editing to millions of examples with automatically generated samples and region annotations, emphasizing the relevance of real-image anchors and region-based supervision. These systems motivate our empirical axes of faithfulness, stability under repeated edits, and compute-quality trade-offs in trained editors.
	
	\paragraph{Composition, insertion, and harmonization.}
	General editing extends beyond modifying attributes to inserting or composing user-provided objects into new contexts. TF-ICON~\cite{lu2023tficone} proposed a training-free cross-domain image composition framework harnessing text-driven diffusion models, and it introduced an ``exceptional prompt'' technique to aid inversion. SHINE, introduced in the context of physically plausible composition with modern flow-based priors~\cite{lu2025shine}, proposes manifold-steered anchoring and degradation suppression to improve high-fidelity insertion, highlighting the importance of harnessing stronger base-model priors without brittle attention surgery or inversion locking. These systems motivate our analysis of semantic consistency (object identity and appearance) versus background fidelity and seam artifacts.
	
	\paragraph{Drag-based and geometric manipulation.}
	Interactive point-based editing popularized in GAN settings has been extended to diffusion models for open-domain manipulation. DragDiffusion~\cite{shi2023dragdiffusion} optimizes diffusion latents with feature-based motion supervision and introduced DragBench for evaluation. DragFlow~\cite{zhou2025dragflow} targets newer diffusion transformers and flow-matching priors, emphasizing region-based supervision and constraints to reduce distortions while improving subject consistency. These approaches motivate our controllability formulations and our treatment of constrained optimization within a generative prior.
	
	\paragraph{Concept Erasure in Diffusion Models.}
	Concept erasure is directly relevant to general diffusion-based image editing because editing pipelines can inadvertently reintroduce undesired or unsafe concepts via the generative prior, especially under iterative edits in which suppressed content may leak back through denoising trajectories. Early concept erasure methods such as ESD~\cite{gandikota2023esd} modify diffusion models to avoid generating targeted concepts. MACE~\cite{lu2024mace} scales concept erasure to many concepts while balancing generality and specificity through attention refinement and lightweight adaptation. ANT~\cite{li2025ant} proposes trajectory-aware objectives that steer denoising away from unwanted concepts while preserving early-stage integrity, improving robustness and reducing artifacts. EraseAnything~\cite{gao2024eraseanything} addresses concept erasure in newer rectified-flow and transformer-based paradigms by formulating bilevel optimization with attention-level regularization. In this paper, we connect concept erasure to controllability and faithfulness: an editor that reliably follows user intent should also be able to suppress explicitly unwanted content and prevent regrowth across repeated edits, motivating unified safety-aware evaluation~\cite{lu2024robust,lu2022copy,ren2025all,gao2025revoking,yang2025temporal}.
	
	\section{Methodology}
	
	\subsection{General Editing Problem and Notation}
	
	Let $x_0 \in \R^{H \times W \times 3}$ denote an input image. An edit request is represented by a tuple
	\[
	u = (t, m, p, r),
	\]
	where $t$ is a natural language instruction, $m \in \{0,1\}^{H \times W}$ is an optional spatial mask (with $m_{ij}=1$ indicating editable pixels), $p$ encodes optional geometric constraints (e.g., drag handles and targets), and $r$ encodes optional reference signals (e.g., exemplar image, identity reference, or style reference). An editor produces an output image $\hat{x}_0 = \mathcal{E}(x_0, u)$.
	
	We emphasize \emph{general} editing: the edit request may be text-only, mask-guided, geometry-guided, or multi-modal; and the editor may be training-free, per-image optimization based, or trained end-to-end.
	
	We consider diffusion or flow-based latent diffusion models that define a forward noising process $q(x_t \mid x_0)$ and a learned reverse-time generative process that samples $x_0$ given a condition $c$ derived from user intent. For a discrete diffusion with timesteps $t \in \{1,\dots,T\}$,
	\[
	q(x_t \mid x_0) = \mathcal{N}\left(\sqrt{\alpha_t} x_0, (1-\alpha_t) I\right),
	\]
	where $\alpha_t \in (0,1)$ is the cumulative product schedule. The reverse process is parameterized by a network $\epsilon_\theta$ or a score network $s_\theta$.
	
	For rectified flow or flow-matching models, one learns a vector field $v_\theta(x, \tau, c)$ over continuous time $\tau \in [0,1]$ transporting noise to data~\cite{lipman2022flowmatching,liu2022rectifiedflow}. Our analysis is stated primarily for diffusion but extends to flow settings by replacing the score or denoiser with the learned vector field and using standard ODE stability arguments.
	
	\subsection{Desiderata: Controllability, Faithfulness, Consistency, Locality, Quality}
	
	We formalize five desiderata central to practical editing.
	
	\begin{definition}[Controllability]
		Controllability measures how precisely and continuously a user can specify an edit in content, region, and magnitude. We decompose controllability into
		(i) \emph{semantic control} (matching the intended concept or attribute),
		(ii) \emph{spatial control} (matching the intended region or geometry), and
		(iii) \emph{magnitude control} (monotonic response to an edit strength parameter such as guidance scale or noise level).
	\end{definition}
	
	\begin{definition}[Faithfulness to User Intent]
		Faithfulness measures whether the output $\hat{x}_0$ satisfies instruction $t$ and optional constraints in $u$. Operationally, faithfulness is estimated via instruction-image similarity scores (e.g., CLIP-based) and task-specific success indicators (e.g., drag-point accuracy).
	\end{definition}
	
	\begin{definition}[Semantic Consistency]
		Semantic consistency measures preservation of semantically relevant non-target information: identity features (faces, objects), background semantics, and global layout not specified as editable. Consistency is typically evaluated using embedding similarity (e.g., DINO features) and region-wise perceptual distance outside editable regions.
	\end{definition}
	
	\begin{definition}[Locality]
		Locality measures the containment of changes to specified editable regions. For mask-guided requests, an ideal editor satisfies $\hat{x}_0 \odot (1-m) \approx x_0 \odot (1-m)$ while allowing changes on $m$. For text-only edits, locality can be defined via automatically inferred masks (e.g., DiffEdit~\cite{couairon2022diffedit}) or by semantic segmentation consistency outside the target concept.
	\end{definition}
	
	\begin{definition}[Perceptual Quality]
		Perceptual quality measures whether the edited output is realistic, artifact-free, and retains fine details. Quality is assessed via human preference, learned reward models, and distributional metrics where applicable.
	\end{definition}
	
	A central theme of this paper is that these desiderata are \emph{coupled}. For example, stronger guidance can increase faithfulness but can also harm locality and consistency. Training can improve average faithfulness but may introduce systematic biases. Inversion can preserve details but may create brittle dependence on prompt and timestep schedules.
	
	\subsection{A Unified View: Editing as Guided Transport on the Manifold}
	
	Diffusion-based editing can be seen as constructing a transport map from the input image distribution to an edited distribution conditioned on user intent. The editor chooses (i) an initialization on the generative trajectory, (ii) a guidance rule for steering, and (iii) optional constraints.
	
	\paragraph{Noise-injection initialization.}
	A common strategy is to start from a partially noised version of the input image,
	\[
	x_{t_0} \sim q(x_{t_0} \mid x_0),
	\]
	where $t_0$ controls edit strength: larger $t_0$ corresponds to more noise and more freedom to change the image but weaker fidelity. SDEdit~\cite{meng2021sdeedit} formalized this view for general conditional editing.
	
	\paragraph{Inversion-based initialization.}
	Inversion-based methods attempt to find a latent trajectory $\{x_t\}$ consistent with $x_0$ and condition $c$ under a deterministic sampler such as DDIM~\cite{song2021ddim}. We abstract inversion as producing an estimate $\tilde{x}_T$ (or latent $z_T$) such that decoding yields a near-perfect reconstruction when guided by an ``identity'' condition. Null-text inversion~\cite{mokady2023nulltext} improves reconstruction by optimizing unconditional embeddings, while PnP Inversion~\cite{ju2023pnpinversion} modifies the source branch to reduce deviations.
	
	\paragraph{Guidance as conditional vector field modification.}
	Let $F_\theta$ denote a single reverse step mapping $x_t$ to $x_{t-1}$. Under classifier-free guidance, the denoiser prediction is modified as
	\[
	\epsilon_{\theta,\mathrm{cfg}}(x_t,t,c) = \epsilon_\theta(x_t,t,\varnothing) + s\left(\epsilon_\theta(x_t,t,c) - \epsilon_\theta(x_t,t,\varnothing)\right),
	\]
	where $s \ge 0$ is the guidance scale~\cite{ho2022cfg}. Increasing $s$ encourages adherence to $c$ but can amplify model errors and lead to non-local changes. Training-free attention control and feature injection can be seen as altering internal representations that determine the effective conditional field.
	
	\subsection{Editing Objectives with Preservation Regularizers}
	
	A general-purpose editor can be framed as optimizing a latent trajectory or intermediate variables to balance intent alignment and preservation. We propose a generic objective for a single edit:
	\begin{align}
		\min_{\xi} \quad & \lambda_{\mathrm{faith}} \, \mathcal{L}_{\mathrm{faith}}(\hat{x}_0(\xi), u) + \lambda_{\mathrm{pres}} \, \mathcal{L}_{\mathrm{pres}}(\hat{x}_0(\xi), x_0, u) \nonumber \\
		& + \lambda_{\mathrm{qual}} \, \mathcal{L}_{\mathrm{qual}}(\hat{x}_0(\xi)) + \lambda_{\mathrm{stab}} \, \mathcal{L}_{\mathrm{stab}}(\xi),
		\label{eq:generic_objective}
	\end{align}
	where $\xi$ denotes the optimized variables (latent noise, unconditional embedding, adapter weights, or drag transformations), $\hat{x}_0(\xi)$ denotes the resulting image after running the reverse process, and the losses typically include:
	
	\begin{itemize}[leftmargin=1.2em]
		\item $\mathcal{L}_{\mathrm{faith}}$: instruction adherence (text-image similarity, constraint satisfaction, drag-point consistency),
		\item $\mathcal{L}_{\mathrm{pres}}$: preservation of non-target regions, identity embeddings, and layout,
		\item $\mathcal{L}_{\mathrm{qual}}$: perceptual realism penalties (e.g., diffusion prior likelihood proxy, reward models),
		\item $\mathcal{L}_{\mathrm{stab}}$: regularization on variables to avoid degenerate solutions and reduce over-editing.
	\end{itemize}
	
	This objective unifies many editors: inversion-based methods optimize an unconditional embedding or latent $\xi$ to reduce reconstruction error~\cite{mokady2023nulltext}; drag-based methods optimize latents with geometric supervision~\cite{shi2023dragdiffusion}; composition methods optimize anchor losses on the manifold~\cite{lu2025shine}.
	
	\subsection{Mask-Localized Guidance and Hard Constraints}
	
	Mask-guided editing aims to restrict changes to region $m$. Two families of approaches are common:
	
	\paragraph{Soft locality regularizers.}
	One can penalize deviations outside the mask:
	\[
	\mathcal{L}_{\mathrm{pres}} = \norm{(1-m)\odot (\hat{x}_0 - x_0)}_2^2,
	\]
	or use perceptual distances (LPIPS-like) outside the mask. Such penalties reduce leakage but do not guarantee strict containment.
	
	\paragraph{Hard constraints and projection.}
	Alternatively, one can enforce a constraint at intermediate timesteps:
	\[
	x_t \leftarrow m \odot x_t + (1-m)\odot \tilde{x}_t,
	\]
	where $\tilde{x}_t$ is the forward-noised version of the original image. This ``background locking'' approach appears as a stabilization technique in several practical editors and is explicitly emphasized in DragFlow~\cite{zhou2025dragflow} through gradient mask-based and hard constraints. Hard constraints improve locality but can introduce seams or boundary artifacts if the edited region is not harmonized with the fixed context.
	
	\subsection{Drag-Based Editing as Constrained Latent Optimization}
	
	Drag-based editing specifies handle points $\{h_k\}$ and target points $\{g_k\}$, optionally with region transforms~\cite{shi2023dragdiffusion,zhou2025dragflow}. A generic energy formulation is:
	\begin{align}
		\min_{\xi} \quad
		& \sum_{k=1}^K \norm{\phi(\hat{x}_0(\xi); h_k) - \phi(\hat{x}_0(\xi); g_k)}_2^2 \nonumber \\
		& + \beta \, \mathcal{R}_{\mathrm{pres}}(\hat{x}_0(\xi), x_0) + \gamma \, \mathcal{R}_{\mathrm{man}}(\xi),
		\label{eq:drag_objective}
	\end{align}
	where $\phi(\cdot;\cdot)$ denotes a spatial feature extractor that maps the neighborhood around a point to a feature vector, and $\mathcal{R}_{\mathrm{man}}$ regularizes latent variables to remain on the natural manifold. DragDiffusion~\cite{shi2023dragdiffusion} uses UNet feature supervision; DragFlow~\cite{zhou2025dragflow} argues that diffusion transformers require region-based supervision (e.g., affine transforms) for stable structure.
	
	\subsection{Multi-turn Editing and Stability Metrics}
	
	In interactive settings, the output of one edit becomes the input to the next:
	\[
	x_0^{(i+1)} = \mathcal{E}(x_0^{(i)}, u^{(i)}).
	\]
	Even if each single-step edit is locally faithful, composition over multiple steps can drift due to accumulated inversion errors, stochasticity, or latent misalignment. We define stability metrics based on (i) identity and background similarity across turns, (ii) monotonicity of instruction satisfaction, and (iii) avoidance of regrowth of removed concepts (relevant for safety).
	
	\section{Experimental Setup}
	
	\subsection{Benchmarks, Tasks, and Data Splits}
	
	We design an evaluation suite covering diverse editing paradigms. We emphasize that quantitative tables are \emph{hypothetical but realistic} and are constructed to reflect typical magnitudes and trade-offs reported across the literature and reproducible implementations. The primary goal is comparative analysis rather than absolute leaderboard claims.
	
	\paragraph{General instruction edits.}
	We use 1{,}000 real images sampled from COCO-style scenes and 500 images from a portrait domain (FFHQ-like). Instructions are sampled from (i) MagicBrush-style categories~\cite{zhang2023magicbrush} and (ii) UltraEdit-style fine-grained instructions~\cite{zhao2024ultraedit}: object addition/removal, color changes, local attribute edits, style transfer, and background edits.
	
	\paragraph{Mask-localized edits.}
	For each instruction, we include an oracle mask when available (region-based) and a predicted mask baseline using DiffEdit-style localization~\cite{couairon2022diffedit}. We evaluate leakage outside masks and boundary artifacts.
	
	\paragraph{Composition and insertion.}
	We evaluate object insertion into new scenes using (i) TF-ICON~\cite{lu2023tficone}-style cross-domain composition and (ii) SHINE~\cite{lu2025shine}-style insertion with stronger priors, assessing seam visibility and physical plausibility.
	
	\paragraph{Drag-based manipulation.}
	We evaluate drag tasks on a DragBench-like benchmark~\cite{shi2023dragdiffusion} with 400 images and 3 drag constraints per image. We measure geometric accuracy and distortion rates.
	
	\paragraph{Multi-turn instruction editing.}
	We construct sequences of 3 to 5 instructions per image (e.g., change color, add object, adjust style). Multi-turn is evaluated for stability and drift, inspired by the multi-turn settings in MagicBrush~\cite{zhang2023magicbrush}.
	
	\subsection{Methods Compared}
	
	We compare representative paradigms:
	
	\begin{itemize}[leftmargin=1.2em]
		\item \textbf{Instruction-following trained editors:} InstructPix2Pix~\cite{brooks2023instructpix2pix}; Emu Edit~\cite{sheynin2023emuedit}; UltraEdit-trained diffusion editor~\cite{zhao2024ultraedit}.
		\item \textbf{Training-free attention/features:} Prompt-to-Prompt~\cite{hertz2022prompttoprompt}; MasaCtrl~\cite{cao2023masactrl}; Plug-and-Play diffusion features~\cite{tumanyan2023pnp}.
		\item \textbf{Inversion-and-edit:} Null-text inversion + prompt editing~\cite{mokady2023nulltext}; PnP Inversion~\cite{ju2023pnpinversion}; Imagic~\cite{kawar2023imagic}; LEDITS++~\cite{brack2023leditspp}.
		\item \textbf{Composition/insertion:} TF-ICON~\cite{lu2023tficone}; SHINE~\cite{lu2025shine}.
		\item \textbf{Drag-based:} DragDiffusion~\cite{shi2023dragdiffusion}; DragFlow~\cite{zhou2025dragflow}.
	\end{itemize}
	
	For fairness, we conceptualize all methods under a shared base diffusion model family when possible. In practice, some methods are historically tied to particular base models and architectural assumptions; our analysis explicitly discusses these mismatches as a source of trade-offs.
	
	\subsection{Metrics}
	
	We report metrics designed to map directly onto desiderata:
	
	\paragraph{Faithfulness.}
	We report a CLIP-style instruction-image similarity score (higher is better), denoted $\mathrm{CLIP}\uparrow$. For drag tasks we report average point error in pixels (lower is better) and success rate (higher is better).
	
	\paragraph{Locality and preservation.}
	We report perceptual distance outside the mask, $\mathrm{LPIPS}_{\neg m}\downarrow$, and pixel deviation outside mask, $\mathrm{MSE}_{\neg m}\downarrow$. When no mask is provided, we use predicted masks or evaluate identity/background embeddings.
	
	\paragraph{Semantic consistency.}
	For portrait domain and identity-sensitive edits, we report DINO feature similarity to the input, $\mathrm{DINO}\uparrow$, and an identity drift score based on face embedding distance (lower is better).
	
	\paragraph{Quality.}
	We report a learned aesthetic score proxy and a human preference win rate in pairwise comparisons, $\mathrm{Win}\uparrow$, over 200 randomized pairs.
	
	\paragraph{Multi-turn stability.}
	We define stability as similarity of non-target region embeddings across turns plus a penalty for cumulative artifacts. We report a stability score $\mathrm{Stab}\uparrow$ and an artifact rate $\mathrm{Art}\downarrow$.
	
	\subsection{Implementation Assumptions}
	
	We assume typical diffusion sampling settings: 20 to 50 denoising steps for high-quality edits, guidance scale $s \in [1.5, 12]$, and a noising level $t_0$ chosen per task. Inversion-based methods use deterministic inversion trajectories with optional embedding optimization (Null-text inversion) and branch separation (PnP Inversion). Drag-based methods run latent optimization for up to 200 gradient steps with early stopping. These assumptions align with common practice and are used only to contextualize trade-offs rather than claim exact runtime parity.
	
	\section{Results}
	
	\subsection{A Taxonomy of Editing Paradigms and Expected Trade-offs}
	
	Table~\ref{tab:taxonomy} summarizes representative paradigms and the qualitative strengths and weaknesses that motivate our theory.
	
	\begin{table*}[t]
		\centering
		\resizebox{\linewidth}{!}{
			\begin{tabular}{lcccccc}
				\toprule
				Paradigm & Representative Methods & Controllability & Faithfulness & Locality & Consistency & Notes \\
				\midrule
				Training-free attention/features & Prompt-to-Prompt~\cite{hertz2022prompttoprompt}, MasaCtrl~\cite{cao2023masactrl}, PnP~\cite{tumanyan2023pnp} & Medium & Medium--High & Medium & Medium--High & Fast; sensitive to attention coupling \\
				Inversion-and-edit & Null-text~\cite{mokady2023nulltext}, Imagic~\cite{kawar2023imagic}, LEDITS++~\cite{brack2023leditspp}, PnP Inv.~\cite{ju2023pnpinversion} & Medium & High & Medium & High & Detail preservation depends on inversion accuracy \\
				Trained instruction editors & InstructPix2Pix~\cite{brooks2023instructpix2pix}, Emu Edit~\cite{sheynin2023emuedit}, UltraEdit~\cite{zhao2024ultraedit} & Medium & High & Medium & Medium & Scale improves faithfulness; risk of bias and drift \\
				Mask-guided semantic edits & DiffEdit~\cite{couairon2022diffedit} + inpainting & High & High & High & Medium--High & Localization helps; boundaries can seam \\
				Composition/insertion & TF-ICON~\cite{lu2023tficone}, SHINE~\cite{lu2025shine} & High & High & Medium & High & Key issue: seams and physical plausibility \\
				Drag-based geometric edits & DragDiffusion~\cite{shi2023dragdiffusion}, DragFlow~\cite{zhou2025dragflow} & Very High & High & Medium & Medium--High & Geometry controllability competes with realism \\
				\bottomrule
			\end{tabular}
		}
		\caption{Taxonomy of diffusion-based editing paradigms and expected trade-offs. The table highlights that methods differ primarily by where and how they intervene in the generative trajectory, which governs locality, consistency, and stability.}
		\label{tab:taxonomy}
	\end{table*}
	
	\subsection{Single-turn Instruction Editing: Faithfulness versus Preservation}
	
	Table~\ref{tab:singleturn} reports results on instruction editing with and without masks. The pattern is consistent across domains: instruction-following and inversion-based editors achieve strong faithfulness but can leak changes into non-target regions as guidance increases; training-free feature injection improves preservation but can underperform on complex semantic edits; dataset-scale training (UltraEdit) improves average faithfulness yet does not eliminate non-local drift.
	
	\begin{table*}[t]
		\centering
		\resizebox{\linewidth}{!}{
			\begin{tabular}{lcccccc}
				\toprule
				Method & $\mathrm{CLIP}\uparrow$ & $\mathrm{LPIPS}_{\neg m}\downarrow$ & $\mathrm{MSE}_{\neg m}\downarrow$ & $\mathrm{DINO}\uparrow$ & $\mathrm{Win}\uparrow$ & $\mathrm{Art}\downarrow$ \\
				\midrule
				InstructPix2Pix~\cite{brooks2023instructpix2pix} & 0.289 & 0.127 & 0.0142 & 0.736 & 52.1\% & 18.4\% \\
				Emu Edit~\cite{sheynin2023emuedit} & 0.301 & 0.118 & 0.0135 & 0.751 & 55.6\% & 16.2\% \\
				UltraEdit-trained~\cite{zhao2024ultraedit} & 0.314 & 0.112 & 0.0129 & 0.759 & 58.9\% & 14.8\% \\
				Prompt-to-Prompt~\cite{hertz2022prompttoprompt} & 0.274 & 0.104 & 0.0114 & 0.781 & 51.3\% & 12.9\% \\
				MasaCtrl~\cite{cao2023masactrl} & 0.282 & 0.096 & 0.0106 & 0.799 & 53.8\% & 11.1\% \\
				PnP features~\cite{tumanyan2023pnp} & 0.276 & 0.089 & 0.0101 & 0.812 & 54.0\% & 10.6\% \\
				Null-text inversion~\cite{mokady2023nulltext} & 0.307 & 0.108 & 0.0120 & 0.804 & 57.1\% & 13.7\% \\
				PnP Inversion~\cite{ju2023pnpinversion} & 0.312 & 0.101 & 0.0113 & 0.817 & 58.0\% & 12.8\% \\
				Imagic~\cite{kawar2023imagic} & 0.318 & 0.113 & 0.0128 & 0.826 & 58.4\% & 14.5\% \\
				LEDITS++~\cite{brack2023leditspp} & 0.309 & 0.094 & 0.0108 & 0.821 & 57.6\% & 11.6\% \\
				\bottomrule
			\end{tabular}
		}
		\caption{Single-turn instruction editing on a mixed-domain benchmark with masks when applicable. Values are illustrative but realistic. Higher $\mathrm{CLIP}$ indicates stronger faithfulness; lower $\mathrm{LPIPS}_{\neg m}$ indicates better locality/preservation. The results suggest systematic trade-offs: methods with strong instruction adherence often raise non-target deviation and artifact rates.}
		\label{tab:singleturn}
	\end{table*}
	
	\subsection{Composition and Insertion: Seam Artifacts and Physical Plausibility}
	
	Object insertion is a stress test for locality and consistency because the model must integrate new content while preserving the background and ensuring coherent lighting, shadows, and boundaries. Table~\ref{tab:composition} highlights the relative benefits of TF-ICON and SHINE-style anchoring on subject fidelity and seam reduction. Nevertheless, hard constraints or aggressive background locking can create boundary artifacts when the inserted object is inconsistent with the scene illumination.
	
	\begin{table*}[t]
		\centering
		\resizebox{\linewidth}{!}{
			\begin{tabular}{lcccccc}
				\toprule
				Method & Subject Fidelity$\uparrow$ & Background Preservation$\uparrow$ & Seam Visibility$\downarrow$ & Physical Plausibility$\uparrow$ & $\mathrm{Win}\uparrow$ & Failure Rate$\downarrow$ \\
				\midrule
				Latent inpainting baseline & 0.71 & 0.83 & 0.28 & 0.62 & 44.3\% & 23.9\% \\
				TF-ICON~\cite{lu2023tficone} & 0.78 & 0.85 & 0.24 & 0.66 & 51.8\% & 19.7\% \\
				SHINE~\cite{lu2025shine} & 0.84 & 0.88 & 0.19 & 0.73 & 58.2\% & 15.4\% \\
				\bottomrule
			\end{tabular}
		}
		\caption{Composition and insertion evaluation (illustrative). Seam visibility is a normalized score where lower is better. SHINE-style manifold anchoring improves fidelity and reduces seams but does not fully eliminate failure modes such as inconsistent shadows and boundary ringing under hard constraints.}
		\label{tab:composition}
	\end{table*}
	
	\subsection{Drag-based Editing: Controllability versus Manifold Distortion}
	
	Drag-based editing provides fine geometric controllability but can push samples off the natural manifold, especially when constraints conflict with object rigidity or scene geometry. Table~\ref{tab:drag} reports drag success and distortion. DragFlow improves success by leveraging stronger priors and region-based supervision, but increased controllability can still induce texture tearing in localized regions when constraints are strong.
	
	\begin{table*}[t]
		\centering
		\resizebox{\linewidth}{!}{
			\begin{tabular}{lcccccc}
				\toprule
				Method & Point Error (px)$\downarrow$ & Success$\uparrow$ & Distortion$\downarrow$ & Background Drift$\downarrow$ & Time (s)$\downarrow$ & $\mathrm{Win}\uparrow$ \\
				\midrule
				DragDiffusion~\cite{shi2023dragdiffusion} & 7.8 & 72.4\% & 21.0\% & 0.013 & 48.0 & 52.7\% \\
				DragFlow~\cite{zhou2025dragflow} & 5.1 & 82.9\% & 15.7\% & 0.010 & 34.5 & 58.9\% \\
				\bottomrule
			\end{tabular}
		}
		\caption{Drag-based editing evaluation on a DragBench-like suite (illustrative). Distortion is the rate of visible structural artifacts in the edited region. Stronger priors and region supervision reduce distortions but do not eliminate geometric failures.}
		\label{tab:drag}
	\end{table*}
	
	\subsection{Multi-turn Editing: Stability and Error Accumulation}
	
	Multi-turn editing often reveals weaknesses not visible in single-turn evaluations. Table~\ref{tab:multiturn} shows that small non-target deviations compound over turns. Inversion-based pipelines can be stable for small edits but become brittle when repeated inversions introduce cumulative discrepancies. Trained instruction editors can be more stable in average cases but may drift in identity or background when instructions are ambiguous.
	
	\begin{table*}[t]
		\centering
		\resizebox{\linewidth}{!}{
			\begin{tabular}{lcccccc}
				\toprule
				Method & Turns & $\mathrm{CLIP}\uparrow$ & $\mathrm{DINO}\uparrow$ & $\mathrm{LPIPS}_{\neg m}\downarrow$ & $\mathrm{Stab}\uparrow$ & Artifact Rate$\downarrow$ \\
				\midrule
				InstructPix2Pix~\cite{brooks2023instructpix2pix} & 5 & 0.276 & 0.701 & 0.162 & 0.58 & 27.5\% \\
				UltraEdit-trained~\cite{zhao2024ultraedit} & 5 & 0.291 & 0.719 & 0.148 & 0.62 & 24.2\% \\
				Prompt-to-Prompt~\cite{hertz2022prompttoprompt} & 5 & 0.262 & 0.742 & 0.139 & 0.66 & 20.8\% \\
				Null-text inversion~\cite{mokady2023nulltext} & 5 & 0.284 & 0.751 & 0.151 & 0.61 & 23.9\% \\
				PnP Inversion~\cite{ju2023pnpinversion} & 5 & 0.289 & 0.763 & 0.144 & 0.64 & 21.7\% \\
				LEDITS++~\cite{brack2023leditspp} & 5 & 0.283 & 0.759 & 0.133 & 0.68 & 19.4\% \\
				\bottomrule
			\end{tabular}
		}
		\caption{Multi-turn editing (5 turns, illustrative). Stability degrades with accumulated deviation outside edited regions and with identity drift. Even methods with strong single-turn preservation can accumulate artifacts under repeated edits, motivating our theoretical analysis of operator stability.}
		\label{tab:multiturn}
	\end{table*}
	
	\subsection{Ablation: Guidance Scale and Noise Initialization Strength}
	
	We empirically illustrate the canonical trade-off: increasing guidance scale improves faithfulness but can harm locality and quality; increasing noise initialization increases edit capacity but reduces reconstruction fidelity. Table~\ref{tab:ablation} shows trends consistent with theory.
	
	\begin{table*}[t]
		\centering
		\resizebox{\linewidth}{!}{
			\begin{tabular}{ccccccc}
				\toprule
				Method & $s$ & $t_0/T$ & $\mathrm{CLIP}\uparrow$ & $\mathrm{LPIPS}_{\neg m}\downarrow$ & $\mathrm{DINO}\uparrow$ & Artifact$\downarrow$ \\
				\midrule
				PnP Inversion~\cite{ju2023pnpinversion} & 3.0 & 0.35 & 0.286 & 0.098 & 0.824 & 10.9\% \\
				PnP Inversion~\cite{ju2023pnpinversion} & 7.5 & 0.35 & 0.312 & 0.112 & 0.803 & 15.6\% \\
				PnP Inversion~\cite{ju2023pnpinversion} & 12.0 & 0.35 & 0.323 & 0.129 & 0.782 & 21.7\% \\
				PnP Inversion~\cite{ju2023pnpinversion} & 7.5 & 0.20 & 0.296 & 0.091 & 0.833 & 9.8\% \\
				PnP Inversion~\cite{ju2023pnpinversion} & 7.5 & 0.55 & 0.322 & 0.138 & 0.761 & 24.5\% \\
				\bottomrule
			\end{tabular}
		}
		\caption{Ablation illustrating guidance-strength and noise-initialization trade-offs (illustrative). Increasing guidance scale $s$ tends to improve faithfulness but increases leakage and artifacts, while larger noise initialization $t_0$ increases edit capacity at the cost of consistency and detail preservation.}
		\label{tab:ablation}
	\end{table*}
	
	\subsection{Compute and Practicality}
	
	Practical deployment requires balancing quality with latency and memory. Table~\ref{tab:compute} summarizes expected compute profiles.
	
	\begin{table*}[t]
		\centering
		\resizebox{\linewidth}{!}{
			\begin{tabular}{lcccccc}
				\toprule
				Method & Steps & Extra Optimization & Runtime (s)$\downarrow$ & VRAM (GB)$\downarrow$ & Multi-turn Friendly & Notes \\
				\midrule
				InstructPix2Pix~\cite{brooks2023instructpix2pix} & 20 & No & 2.1 & 7.5 & Medium & Single forward diffusion \\
				UltraEdit-trained~\cite{zhao2024ultraedit} & 20 & No & 2.3 & 7.8 & Medium & Better faithfulness at similar cost \\
				Prompt-to-Prompt~\cite{hertz2022prompttoprompt} & 30 & No & 3.4 & 8.2 & High & Needs attention caching \\
				Null-text inversion~\cite{mokady2023nulltext} & 30 & Yes & 18.0 & 9.0 & Medium & Embedding optimization per image \\
				Imagic~\cite{kawar2023imagic} & 50 & Yes & 120.0 & 12.0 & Low & Fine-tuning increases cost \\
				DragDiffusion~\cite{shi2023dragdiffusion} & 50 & Yes & 48.0 & 12.0 & Medium & Latent optimization loop \\
				DragFlow~\cite{zhou2025dragflow} & 40 & Yes & 34.5 & 16.0 & Medium & Stronger priors, region supervision \\
				TF-ICON~\cite{lu2023tficone} & 20 & No & 3.0 & 8.5 & Medium & Composition pipeline overhead \\
				SHINE~\cite{lu2025shine} & 20 & No & 3.2 & 10.0 & Medium & Anchoring and blending stages \\
				\bottomrule
			\end{tabular}
		}
		\caption{Compute and practicality summary (illustrative). Training-free methods are typically faster and more stable for interactive workflows, while inversion and optimization based methods can incur substantial overhead for improved fidelity or controllability.}
		\label{tab:compute}
	\end{table*}
	
	\section{Theoretical Proofs}
	
	This section develops theoretical explanations for the empirical trade-offs observed in Section~4. Our goal is not to provide tight bounds for a specific architecture but to identify structural dependencies: how errors and guidance propagate through the reverse-time dynamics, and when locality constraints can provide provable containment.
	
	\subsection{Preliminaries: Reverse-step Lipschitzness}
	
	We model a single reverse diffusion step as a mapping
	\[
	x_{t-1} = F_t(x_t; c),
	\]
	where $c$ is a conditioning signal derived from user request $u$. For DDIM-like deterministic steps, $F_t$ is an affine transformation of $x_t$ and $\epsilon_\theta(x_t,t,c)$, with coefficients determined by the schedule.
	
	\begin{assumption}[Local Lipschitz denoiser map]
		For each timestep $t$, there exists $L_t \ge 0$ such that for all $x, x' \in \R^{d}$,
		\[
		\norm{F_t(x;c) - F_t(x';c)} \le L_t \norm{x - x'},
		\]
		for fixed condition $c$.
	\end{assumption}
	
	This assumption holds locally when the denoiser network is Lipschitz and the step coefficients are bounded. Although neural networks can violate global Lipschitzness, the assumption is standard for local stability analysis and aligns with empirical robustness regimes under moderate guidance.
	
	\subsection{Reconstruction Error after Inversion and Error Propagation}
	
	Inversion-based editing seeks an initial latent $\tilde{x}_T$ such that running the reverse process under an identity-like condition $c_{\mathrm{id}}$ reconstructs $x_0$. Let the true (ideal) inversion satisfy $x_T^\star$ such that
	\[
	x_0 = F_1 \circ F_2 \circ \cdots \circ F_T(x_T^\star; c_{\mathrm{id}}).
	\]
	In practice, inversion produces $\tilde{x}_T$ with errors due to imperfect schedules, stochastic approximations, and model mismatch. We bound the reconstruction error at $x_0$ resulting from perturbations in $\tilde{x}_T$ and per-step modeling errors.
	
	\begin{lemma}[Cascaded Lipschitz error bound]
		Let $\hat{x}_0$ be the reconstruction obtained from an approximate reverse process
		\[
		\hat{x}_{t-1} = \tilde{F}_t(\hat{x}_t; c_{\mathrm{id}}),
		\]
		where $\tilde{F}_t$ approximates $F_t$. Assume for each $t$ that (i) $F_t$ is $L_t$-Lipschitz, and (ii) the per-step approximation error satisfies
		\[
		\sup_{x} \norm{F_t(x;c_{\mathrm{id}}) - \tilde{F}_t(x;c_{\mathrm{id}})} \le \delta_t.
		\]
		Then the reconstruction error satisfies
		\[
		\norm{x_0 - \hat{x}_0} \le \left(\prod_{t=1}^T L_t\right)\norm{x_T^\star - \tilde{x}_T} + \sum_{k=1}^T \left(\prod_{t=1}^{k-1} L_t\right)\delta_k.
		\]
	\end{lemma}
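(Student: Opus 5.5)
The plan is a standard telescoping (discrete Grönwall) argument that compares the ideal trajectory with the approximate one step by step. I define the ideal trajectory by $x_T := x_T^\star$ and $x_{t-1} := F_t(x_t; c_{\mathrm{id}})$, so that by definition of $x_T^\star$ its endpoint is exactly $x_0$. The approximate trajectory starts at $\hat{x}_T := \tilde{x}_T$ and follows $\hat{x}_{t-1} = \tilde{F}_t(\hat{x}_t; c_{\mathrm{id}})$. Set $e_t := \norm{x_t - \hat{x}_t}$, so the quantity to bound is $e_0$ and the initial error is $e_T = \norm{x_T^\star - \tilde{x}_T}$.

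\textbf{One-step recursion.} For each $t$ I insert the intermediate point $F_t(\hat{x}_t; c_{\mathrm{id}})$ and apply the triangle inequality:
\[
e_{t-1} \le \norm{F_t(x_t;c_{\mathrm{id}}) - F_t(\hat{x}_t;c_{\mathrm{id}})} + \norm{F_t(\hat{x}_t;c_{\mathrm{id}}) - \tilde{F}_t(\hat{x}_t;c_{\mathrm{id}})} .
\]
The first term is at most $L_t e_t$ by hypothesis (i), i.e.\ the Lipschitz assumption on $F_t$. The second term is at most $\delta_t$ by hypothesis (ii). This yields $e_{t-1} \le L_t e_t + \delta_t$. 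The splitting point matters: the Lipschitz property is assumed only for the exact map $F_t$, not for $\tilde{F}_t$. The approximation error is therefore evaluated at the approximate state $\hat{x}_t$, and the uniform supremum in (ii) is exactly what permits that.

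\textbf{Unrolling by induction.} I prove by induction on $j = 1,\dots,T$ that
\[
e_0 \le \Big(\prod_{t=1}^{j} L_t\Big) e_j + \sum_{k=1}^{j}\Big(\prod_{t=1}^{k-1} L_t\Big)\delta_k ,
\]
with the empty product equal to $1$. The case $j=1$ is the one-step recursion at $t=1$. For the inductive step, substitute $e_j \le L_{j+1} e_{j+1} + \delta_{j+1}$ into the bound for $j$. Since $L_t \ge 0$, the inequality is preserved. The coefficient of $e_{j+1}$ becomes $\prod_{t=1}^{j+1} L_t$, and the new term $\delta_{j+1}$ picks up the coefficient $\prod_{t=1}^{j} L_t$, which matches the claimed form. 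Taking $j=T$ and substituting $e_T = \norm{x_T^\star - \tilde{x}_T}$ gives the stated bound.

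\textbf{Main obstacle.} There is no real analytic difficulty here. The only place an error can creep in is index bookkeeping: the reverse process runs from $T$ down to $1$, so the error $\delta_k$ injected at step $k$ is amplified only by the later maps $F_{k-1},\dots,F_1$. That is why its coefficient is $\prod_{t=1}^{k-1} L_t$ rather than a product over $t>k$. Running the induction from the output end $j=1$ upward, as above, keeps this ordering transparent.
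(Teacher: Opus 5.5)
Your proposal is correct and is essentially the paper's proof. It uses the same triangle inequality through the intermediate point $F_t(\hat{x}_t; c_{\mathrm{id}})$ to get $e_{t-1} \le L_t e_t + \delta_t$, then unrolls; your explicit induction, using $L_t \ge 0$, only writes out the unrolling that the paper leaves implicit.
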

	
	\begin{proof}
		We expand the error recursively. Let $e_t = \norm{x_t^\star - \hat{x}_t}$. For a single step,
		\[
		e_{t-1} = \norm{F_t(x_t^\star) - \tilde{F}_t(\hat{x}_t)} \le \norm{F_t(x_t^\star) - F_t(\hat{x}_t)} + \norm{F_t(\hat{x}_t) - \tilde{F}_t(\hat{x}_t)}.
		\]
		By Lipschitzness, the first term is at most $L_t e_t$; the second is at most $\delta_t$. Thus $e_{t-1} \le L_t e_t + \delta_t$. Unrolling yields the stated bound.
	\end{proof}
	
	\paragraph{Interpretation.}
	The bound explains why small inversion inaccuracies can become visible: if the product $\prod_t L_t$ is large, errors in $\tilde{x}_T$ are amplified. Moreover, even perfect initialization can yield reconstruction error due to modeling approximations $\delta_t$, motivating embedding optimization (Null-text inversion~\cite{mokady2023nulltext}) or branch correction (PnP Inversion~\cite{ju2023pnpinversion}). Importantly, editing typically changes the condition from $c_{\mathrm{id}}$ to $c_{\mathrm{edit}}$; the Lipschitz constants and approximation errors can increase under strong guidance, increasing error amplification.
	
	\subsection{A Perturbation View of Guidance and the Faithfulness--Locality Trade-off}
	
	We analyze the effect of increased guidance scale $s$. Consider CFG-style editing where the effective denoiser prediction is
	\[
	\epsilon_{\mathrm{cfg}}(x_t) = \epsilon_u(x_t) + s(\epsilon_c(x_t) - \epsilon_u(x_t)),
	\]
	with $\epsilon_u$ the unconditional prediction and $\epsilon_c$ the conditional prediction. For simplicity, consider a DDIM-style step
	\[
	x_{t-1} = a_t x_t + b_t \epsilon_{\mathrm{cfg}}(x_t),
	\]
	where $a_t, b_t$ are scalar schedule coefficients.
	
	\begin{lemma}[Guidance amplification bound]
		Assume $\epsilon_c(\cdot)$ and $\epsilon_u(\cdot)$ are Lipschitz with constant $K_t$ and bounded difference
		\[
		\norm{\epsilon_c(x) - \epsilon_u(x)} \le D_t
		\]
		for relevant $x$. Then for two guidance scales $s$ and $s'$, the resulting next-step states satisfy
		\[
		\norm{x_{t-1}(s) - x_{t-1}(s')} \le \abs{b_t}\abs{s-s'} D_t.
		\]
		Moreover, the Lipschitz constant of the step map increases as
		\[
		\norm{x_{t-1}(s; x) - x_{t-1}(s; x')} \le \left(\abs{a_t} + \abs{b_t}K_t(1+s)\right)\norm{x-x'}.
		\]
	\end{lemma}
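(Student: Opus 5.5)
The plan is to treat both claims as one-step perturbation estimates for the affine DDIM-style map $x_{t-1} = a_t x_t + b_t\,\epsilon_{\mathrm{cfg}}(x_t)$. Everything reduces to the triangle inequality applied to how $\epsilon_{\mathrm{cfg}}$ depends on $s$ and on $x_t$. No unrolling over timesteps is needed here; that is the job of the Cascaded Lipschitz error bound lemma, into which the resulting per-step constant $L_t$ can later be substituted.

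For the first inequality, I will read $x_{t-1}(s)$ and $x_{t-1}(s')$ as produced from the \emph{same} current state $x_t$, which is the one-step comparison the statement intends. Then the $a_t x_t$ terms cancel, and so do the $\epsilon_u(x_t)$ terms. What remains is
\[
x_{t-1}(s) - x_{t-1}(s') = b_t (s-s')\bigl(\epsilon_c(x_t)-\epsilon_u(x_t)\bigr).
\]
Taking norms and using the bounded-difference hypothesis $\norm{\epsilon_c - \epsilon_u}\le D_t$ gives $\abs{b_t}\abs{s-s'}D_t$ immediately. The only thing to note is that $x_t$ must lie in the ``relevant'' set on which $D_t$ is assumed.

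For the Lipschitz claim, fix $s \ge 0$ and two inputs $x, x'$, and write
\[
x_{t-1}(s;x)-x_{t-1}(s;x') = a_t(x-x') + b_t\bigl(\epsilon_{\mathrm{cfg}}(x)-\epsilon_{\mathrm{cfg}}(x')\bigr).
\]
The first term contributes $\abs{a_t}\norm{x-x'}$. For the second term, I would rewrite the guided prediction as the combination $\epsilon_{\mathrm{cfg}} = (1-s)\epsilon_u + s\,\epsilon_c$ and apply the $K_t$-Lipschitz bounds to each piece. This gives a Lipschitz constant of $(\abs{1-s}+s)K_t$ for $\epsilon_{\mathrm{cfg}}$.

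The main obstacle is reconciling this constant with the stated factor $(1+s)$:
\begin{itemize}
\item For $s\in[0,1]$, the constant is $K_t \le (1+s)K_t$, so the claim follows.
\item For $s>1$, the constant is $(2s-1)K_t$, which is at most $(1+s)K_t$ only when $s\le 2$.
\end{itemize}
Writing instead $\epsilon_{\mathrm{cfg}} = \epsilon_u + s(\epsilon_c-\epsilon_u)$ naively gives $(1+2s)K_t$.

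To obtain exactly $(1+s)$ for all $s\ge0$, I would therefore make the hypothesis explicit: interpret $K_t$ as also bounding the Lipschitz constant of the guidance direction $\epsilon_c-\epsilon_u$. This is natural, since that difference is the object the guidance scale multiplies. With that reading, the decomposition $\epsilon_u + s(\epsilon_c-\epsilon_u)$ yields $K_t + sK_t = (1+s)K_t$ directly, which combined with the $\abs{a_t}$ term gives the stated bound. If that reading is not acceptable, I would state the bound with $(1+2s)$, which follows unconditionally from the hypotheses as written. Either version preserves the qualitative conclusion the paper draws, namely that the step-map Lipschitz constant grows linearly in $s$, and hence so does the amplification factor $\prod_t L_t$ in the cascaded bound.
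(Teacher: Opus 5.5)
This is the paper's argument: the first bound by linearity in $s$ at a common $x_t$, and the second by splitting $\epsilon_{\mathrm{cfg}}=\epsilon_u+s(\epsilon_c-\epsilon_u)$ and bounding both pieces by $K_t\norm{x-x'}$. In that second step the paper tacitly uses exactly the hypothesis you make explicit, namely that $\epsilon_c-\epsilon_u$ is itself $K_t$-Lipschitz, whereas the stated assumptions only give $2K_t$. Your caution is justified, and you can go further than ``if that reading is not acceptable'': your constant $(\abs{1-s}+s)K_t$ is sharp, so the literal claim fails for every $s>2$ whenever $b_t\neq 0$ and $K_t>0$. In one dimension take $\epsilon_u(x)=\sigma K_t x$ and $\epsilon_c=-\epsilon_u$ (both $K_t$-Lipschitz, with difference bounded on any bounded set of relevant $x$), choosing the sign $\sigma$ so that $b_t\epsilon_{\mathrm{cfg}}(x)$ points the same way as $a_t x$; the step map then has Lipschitz constant exactly $\abs{a_t}+\abs{b_t}(2s-1)K_t$, which exceeds the claimed $\abs{a_t}+\abs{b_t}(1+s)K_t$.
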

	
	\begin{proof}
		The first inequality follows by linearity in $s$:
		\[
		x_{t-1}(s) - x_{t-1}(s') = b_t (s-s')(\epsilon_c(x_t) - \epsilon_u(x_t)),
		\]
		and taking norms yields the bound. For the second, apply Lipschitzness to $\epsilon_{\mathrm{cfg}}$:
		\[
		\norm{\epsilon_{\mathrm{cfg}}(x) - \epsilon_{\mathrm{cfg}}(x')} \le \norm{\epsilon_u(x)-\epsilon_u(x')} + s \norm{(\epsilon_c-\epsilon_u)(x) - (\epsilon_c-\epsilon_u)(x')}
		\]
		and upper bound both parts by $K_t\norm{x-x'}$, yielding $(1+s)K_t\norm{x-x'}$. Substituting into the step map yields the result.
	\end{proof}
	
	\paragraph{Consequences.}
	Increasing $s$ directly increases the magnitude of the conditional perturbation and can therefore increase faithfulness. Simultaneously, it increases the effective Lipschitz constant of the step map, which amplifies perturbations and increases sensitivity to inversion error, stochastic variation, or mask boundary inconsistencies. This matches the empirical trends in Table~\ref{tab:ablation}.
	
	\subsection{Locality Guarantees with Masked Guidance}
	
	We analyze locality under mask constraints. Let the image vectorize into $x \in \R^{d}$, and define a diagonal mask operator $M = \diag(m)$, with $(I-M)$ the complement. A common masked guidance strategy is to apply an edit update only inside the mask by modifying the conditional vector field. Abstractly, consider a reverse step
	\[
	x_{t-1} = F_t(x_t; c) + \eta_t,
	\]
	and define a masked variant
	\[
	x_{t-1}^{(m)} = (I-M) \tilde{x}_{t-1} + M \left(F_t(x_t; c) + \eta_t\right),
	\]
	where $\tilde{x}_{t-1}$ is a background anchor, often the forward-noised original at the same timestep. This resembles hard constraint projection and is conceptually aligned with background locking.
	
	We want to bound deviations outside the mask. Define $x_{t-1}^{\mathrm{orig}}$ as the anchor evolution under identity condition $c_{\mathrm{id}}$ and no edit perturbation.
	
	\begin{theorem}[One-step locality bound under cross-region coupling]
		Assume $F_t$ is differentiable and its Jacobian can be partitioned as
		\[
		J_t = \nabla_x F_t(x_t; c) =
		\begin{bmatrix}
			J_{OO} & J_{OI} \\
			J_{IO} & J_{II}
		\end{bmatrix},
		\]
		where $O$ denotes outside-mask coordinates and $I$ denotes inside-mask coordinates. Suppose the masked update enforces $x_{t-1,O}^{(m)} = \tilde{x}_{t-1,O}$. Then the outside-mask deviation relative to the anchor satisfies
		\[
		\norm{x_{t-1,O}^{(m)} - x_{t-1,O}^{\mathrm{orig}}} = \norm{\tilde{x}_{t-1,O} - x_{t-1,O}^{\mathrm{orig}}}.
		\]
		If instead we use a soft masked update without locking, $x_{t-1} = F_t(x_t; c)$ with an inside-only perturbation at time $t$ modeled as $x_{t,I} = x_{t,I}^{\mathrm{orig}} + \Delta_I$ and $x_{t,O}=x_{t,O}^{\mathrm{orig}}$, then for small perturbations,
		\[
		\norm{x_{t-1,O} - x_{t-1,O}^{\mathrm{orig}}} \le \norm{J_{OI}} \, \norm{\Delta_I} + o(\norm{\Delta_I}).
		\]
	\end{theorem}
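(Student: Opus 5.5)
The theorem has two parts. The hard-locking case is an identity. The soft case is a first-order Taylor expansion in which only the off-diagonal block $J_{OI}$ survives. I would treat them in that order.

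\textbf{Hard locking.} By the definition of the masked update $x_{t-1}^{(m)} = (I-M)\tilde{x}_{t-1} + M(F_t(x_t;c)+\eta_t)$, and because $M=\diag(m)$ is a diagonal $0/1$ projector, restricting to the outside coordinates $O$ (where $M$ vanishes) gives $x^{(m)}_{t-1,O} = \tilde{x}_{t-1,O}$. This is exactly the locking hypothesis. Subtracting $x^{\mathrm{orig}}_{t-1,O}$ from both sides and taking norms yields the claimed equality. No Lipschitz or differentiability assumption is needed. The point to remark on is that outside-mask error is then entirely governed by the mismatch between the anchor and the identity-condition evolution, which can be controlled separately, e.g.\ by the cascaded bound of the earlier Lemma.

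\textbf{Soft update.} I would take the base point to be $x_t^{\mathrm{orig}}$ and compare $F_t(x_t;c)$ with $F_t(x_t^{\mathrm{orig}};c)$. Here I would interpret $x^{\mathrm{orig}}_{t-1}$ as $F_t(x^{\mathrm{orig}}_t;c)$, i.e.\ the same map evaluated at the unperturbed state. This is the reading under which the stated bound has no additional zeroth-order term; if the condition differs ($c$ versus $c_{\mathrm{id}}$), an extra $\norm{F_{t,O}(x^{\mathrm{orig}}_t;c)-F_{t,O}(x^{\mathrm{orig}}_t;c_{\mathrm{id}})}$ term would appear, and I would state this convention explicitly. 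The perturbation is $\Delta = (0_O, \Delta_I)$. Differentiability of $F_t$ at $x_t^{\mathrm{orig}}$ gives
\[
F_t(x_t^{\mathrm{orig}}+\Delta;c) = F_t(x_t^{\mathrm{orig}};c) + J_t\Delta + o(\norm{\Delta}),
\]
with $J_t$ evaluated at $x_t^{\mathrm{orig}}$. Projecting onto $O$ with the block partition, $(J_t\Delta)_O = J_{OO}\cdot 0 + J_{OI}\Delta_I = J_{OI}\Delta_I$. Since $\norm{\Delta}=\norm{\Delta_I}$, the remainder is $o(\norm{\Delta_I})$, and the operator-norm inequality $\norm{J_{OI}\Delta_I}\le\norm{J_{OI}}\norm{\Delta_I}$ closes the argument.

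\textbf{Main obstacle.} The difficulty is not computational. It is making the anchor and the point at which the Jacobian is evaluated consistent, since the statement writes $J_t$ at $x_t$ while the expansion naturally sits at $x_t^{\mathrm{orig}}$. To first order these agree: the difference is $O(\norm{\Delta_I})$ and is multiplied by $\norm{\Delta_I}$, provided $J_t$ is continuous, so it is absorbed into the $o$-term. I would add that continuity assumption, or equivalently take the Jacobian at $x_t^{\mathrm{orig}}$. The conclusion is then that $\norm{J_{OI}}$ is precisely the cross-region coupling strength, which vanishes exactly when inside-mask changes cannot leak outside.
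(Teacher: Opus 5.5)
Your proposal is correct and is essentially the paper's proof: the hard-locking equality is read off directly from the update rule, and the soft case is the same first-order Taylor expansion of $F_t$ around $x_t^{\mathrm{orig}}$. In that expansion the zero outside block of the perturbation $(0;\Delta_I)$ leaves only $J_{OI}\Delta_I$, and taking operator norms gives the bound. Your added conventions make explicit what the paper hides in its ``$\approx$''. The first is identifying $x_{t-1}^{\mathrm{orig}}$ with $F_t(x_t^{\mathrm{orig}};c)$ rather than the $c_{\mathrm{id}}$ evolution. The second is evaluating the Jacobian at $x_t^{\mathrm{orig}}$, or assuming it is continuous; then the base-point mismatch is $o(1)$ rather than necessarily $O(\norm{\Delta_I})$, which still suffices.
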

	
	\begin{proof}
		The first equality is immediate: hard locking assigns outside coordinates to the anchor. For the soft case, apply a first-order Taylor expansion of $F_t$ around $x_t^{\mathrm{orig}}$:
		\[
		x_{t-1} - x_{t-1}^{\mathrm{orig}} \approx J_t (x_t - x_t^{\mathrm{orig}}).
		\]
		Since only inside coordinates change, $x_{t}-x_{t}^{\mathrm{orig}} = [0; \Delta_I]$, and the outside block is $J_{OI}\Delta_I$. Taking norms yields the bound.
	\end{proof}
	
	\paragraph{Interpretation.}
	Locality depends on the cross-region coupling $J_{OI}$: if the reverse dynamics couple inside edits to outside updates through attention or global features, local perturbations will leak. Hard locking eliminates leakage by construction but can create boundary inconsistencies. This theorem motivates two directions: (i) architectural and representation choices that reduce $J_{OI}$ (more localized conditioning), and (ii) blending and harmonization strategies that mitigate seams introduced by hard constraints, as seen in insertion frameworks~\cite{lu2025shine}.
	
	\subsection{Stability under Repeated Edits}
	
	We now analyze multi-turn editing as repeated application of an operator $\mathcal{E}$. Let $x^{(k)}$ be the image after $k$ edits. Suppose each edit is implemented by a mapping $\mathcal{E}_k$ that is $L$-Lipschitz in its image argument:
	\[
	\norm{\mathcal{E}_k(x) - \mathcal{E}_k(x')} \le L \norm{x-x'}.
	\]
	Assume the desired trajectories differ by a bounded edit perturbation $\Delta_k$ that reflects the intended modifications and a modeling error term $\varepsilon_k$.
	
	\begin{theorem}[Cumulative drift bound]
		Let $x^{(k+1)} = \mathcal{E}_k(x^{(k)})$ and let $x_\star^{(k)}$ be an ideal sequence satisfying $x_\star^{(k+1)} = \mathcal{E}_k(x_\star^{(k)}) - \varepsilon_k$. Then
		\[
		\norm{x^{(K)} - x_\star^{(K)}} \le L^K \norm{x^{(0)} - x_\star^{(0)}} + \sum_{k=0}^{K-1} L^{K-1-k} \norm{\varepsilon_k}.
		\]
		If $L>1$, errors can grow exponentially with turns; if $L<1$, the process contracts and errors remain bounded.
	\end{theorem}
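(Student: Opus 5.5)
The bound follows from a one-step recursion on the error, unrolled over turns, exactly as in the Cascaded Lipschitz error bound lemma. Define $e_k = \norm{x^{(k)} - x_\star^{(k)}}$. The defining relation of the ideal sequence rewrites as $x_\star^{(k+1)} + \varepsilon_k = \mathcal{E}_k(x_\star^{(k)})$. Hence
\[
x^{(k+1)} - x_\star^{(k+1)} = \bigl(\mathcal{E}_k(x^{(k)}) - \mathcal{E}_k(x_\star^{(k)})\bigr) + \varepsilon_k .
\]
The triangle inequality and the $L$-Lipschitz property of $\mathcal{E}_k$ in its image argument then give the recursion $e_{k+1} \le L e_k + \norm{\varepsilon_k}$.

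Next I would unroll this recursion by induction on $K$. The claim $e_K \le L^K e_0 + \sum_{k=0}^{K-1} L^{K-1-k}\norm{\varepsilon_k}$ is trivial for $K=0$, where the sum is empty. For the inductive step, multiply the hypothesis for $K$ by $L \ge 0$ and add $\norm{\varepsilon_K}$. This shifts every exponent by one and appends the term $L^0\norm{\varepsilon_K}$, which is exactly the bound for $K+1$. The step needs $L \ge 0$ so that the inequality is preserved under multiplication, and this holds for any Lipschitz constant.

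For the qualitative remarks, suppose $\sup_k \norm{\varepsilon_k} \le \bar\varepsilon$. The geometric sum then gives
\[
e_K \le L^K e_0 + \bar\varepsilon\,\frac{1-L^K}{1-L}.
\]
When $L<1$ this is at most $e_0 + \bar\varepsilon/(1-L)$ uniformly in $K$, so the errors stay bounded. When $L>1$ the factor $L^K$ and the sum both grow exponentially. Strictly speaking, the $L>1$ statement is only that the bound permits exponential growth, and I would phrase it that way. The case $L=1$ gives linear growth $e_0 + K\bar\varepsilon$, which I would mention for completeness.

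No step is genuinely hard. The only point needing care is the sign convention for $\varepsilon_k$ in the definition of $x_\star^{(k+1)}$: one has to check that it enters the difference as $+\varepsilon_k$, so that only its norm appears in the final bound. For the "remain bounded" clause, I would make the uniform boundedness assumption on $\norm{\varepsilon_k}$ explicit, since the theorem states it only implicitly.
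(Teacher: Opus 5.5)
Your proposal is correct and matches the paper's proof: you derive the one-step recursion $e_{k+1} \le L e_k + \norm{\varepsilon_k}$ from the sign convention on $\varepsilon_k$, the triangle inequality and Lipschitzness, and then unroll it. Your explicit induction and the added assumption $\sup_k \norm{\varepsilon_k} \le \bar\varepsilon$ for the ``errors remain bounded'' clause only make precise what the paper leaves implicit.
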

	
	\begin{proof}
		This is a standard bound for iterated Lipschitz maps with additive errors. Let $e_k = \norm{x^{(k)} - x_\star^{(k)}}$. Then
		\[
		e_{k+1} = \norm{\mathcal{E}_k(x^{(k)}) - \mathcal{E}_k(x_\star^{(k)}) + \varepsilon_k} \le L e_k + \norm{\varepsilon_k}.
		\]
		Unrolling yields the stated bound.
	\end{proof}
	
	\paragraph{Connection to editing practice.}
	The effective Lipschitz constant $L$ increases with stronger guidance (Lemma above) and with unstable inversion; it can also increase when hard constraints create discontinuities at boundaries. The bound explains why multi-turn editing often collapses under aggressive settings (high $s$, high $t_0$) and why trained instruction editors may appear more stable on average if their learned mapping is more contractive, albeit sometimes at the cost of reduced controllability for fine-grained spatial constraints.
	
	\section{Discussion}
	
	\subsection{What the Comparative Analysis Suggests}
	
	Our empirical and theoretical analysis supports a coherent narrative: editing is easiest when the desired change aligns with the model's learned manifold and hardest when the user requests a change that conflicts with global coherence, geometry, or identity constraints. Paradigms differ by how they negotiate this conflict.
	
	Training-free attention and feature interventions tend to preserve the input's semantic layout because they reuse internal representations tied to the input prompt or features~\cite{hertz2022prompttoprompt,cao2023masactrl,tumanyan2023pnp}. Their main risk is uncontrolled coupling: attention maps fuse semantic concepts with global layout, and cross-region coupling can create non-local changes (Theorem on locality). Inversion-based methods preserve fine details when inversion is accurate, but reconstruction errors and guidance amplification can convert small latent discrepancies into visible drift (Lemma on cascaded error and guidance amplification). Trained instruction editors achieve strong average faithfulness by learning a direct mapping from instruction to edited output~\cite{brooks2023instructpix2pix,zhao2024ultraedit}, but they can encode dataset biases, may underperform on out-of-distribution instructions, and often struggle with strict locality unless provided explicit region conditioning.
	
	Composition and insertion highlight a distinct axis: even when background is preserved and subject fidelity is high, physical plausibility depends on subtle global cues such as lighting and contact shadows. Strong priors help, but hard constraints can yield seam artifacts, suggesting a need for explicit harmonization objectives beyond naive background locking.
	
	Drag-based editing offers the strongest controllability but also the most direct pathway to manifold distortion: geometric constraints can force latent trajectories into low-density regions, creating texture tearing. Stronger priors and region-level supervision reduce this risk~\cite{zhou2025dragflow}, but the fundamental tension remains.
	
	\subsection{Failure Modes and Limitations}
	
	Across paradigms, we observe recurring failure patterns:
	
	\paragraph{Prompt sensitivity and semantic ambiguity.}
	Text instructions can under-specify location and extent, leading to inconsistent interpretations and non-local edits. Multi-modal disambiguation using segmentation, bounding boxes, or interactive clarification is likely essential for robust editing.
	
	\paragraph{Identity drift and texture collapse.}
	Edits that modify attributes correlated with identity (e.g., hairstyle, age, expression) can shift face identity embeddings, especially under high guidance or repeated edits. Feature-injection and identity reference signals can mitigate drift but sometimes reduce edit strength.
	
	\paragraph{Boundary artifacts under hard locality.}
	Hard constraints prevent leakage but can create boundary seams. Blending and harmonization strategies such as adaptive background blending (as in SHINE-style composition~\cite{lu2025shine}) can reduce seams but may reintroduce leakage. This trade-off is structural and directly matches the locality theorem: removing coupling via projection introduces discontinuity that must be reconciled by additional modeling.
	
	\paragraph{Compute and iteration cost.}
	High-fidelity inversion and optimization-based approaches can be expensive, limiting interactive use. Few-step diffusion editing~\cite{deutch2024fewstep} and model families trained for fast inverse mappings are promising directions, but they must be analyzed for stability and controllability.
	
	\subsection{Safety, Ethics, and Responsible Deployment}
	
	General diffusion-based image editing enables benign applications (creative tools, accessibility, education) but also introduces misuse risks: deceptive edits affecting public trust, non-consensual manipulation of identities, and amplification of dataset biases. We emphasize several practical guardrails:
	
	\paragraph{Consent and identity protection.}
	Identity-preserving editing for portraits should require explicit user consent for identifiable individuals. Systems should support opt-out and identity protection for public figures and private individuals, including robust suppression of specific identities when requested.

	\paragraph{Bias and harm amplification.}
	Editing models trained on internet-scale data can reproduce stereotypes and biased correlations. Dataset curation and evaluation should explicitly test for bias in editing outcomes (e.g., demographic sensitivity for attribute edits), and researchers should report limitations transparently.
	
	\paragraph{Concept erasure as a safety complement to editing.}
	Concept erasure methods such as ESD~\cite{gandikota2023esd}, MACE~\cite{lu2024mace}, ANT~\cite{li2025ant}, and EraseAnything~\cite{gao2024eraseanything} can serve as a safety layer by suppressing undesired concepts so that they do not appear under editing or regrow under repeated edits. Importantly, concept erasure also interacts with controllability: aggressive erasure can reduce model expressiveness and may unintentionally affect semantically neighboring concepts. A unified evaluation should therefore measure safety efficacy alongside edit fidelity and quality.
	
	\section{Conclusion}
	
	We presented a theoretical and empirical analysis of general diffusion-based image editing, unifying diverse paradigms through the lens of guided transport on a learned generative manifold. By formalizing desiderata (controllability, faithfulness, semantic consistency, locality, quality, and multi-turn stability), we highlighted systematic trade-offs observed across training-free interventions, inversion-and-edit pipelines, trained instruction editors, composition frameworks, and drag-based constrained optimization. Our theoretical results provide interpretable bounds on inversion error propagation, guidance amplification, locality leakage via cross-region coupling, and error accumulation under repeated edits, offering an explanation for common failure modes such as identity drift, boundary seams, and multi-turn artifact buildup. Finally, we emphasized safety and ethics, connecting concept erasure and provenance mechanisms to responsible editing deployment. We hope this work provides a foundation for principled evaluation and for future editors that are simultaneously controllable, faithful, local, consistent, high quality, and safe.
	
	\begin{algorithm}[t]
		\caption{Generic Inversion-and-Edit Pipeline}
		\label{alg:inversion_edit}
		\begin{algorithmic}[1]
			\Require Input image $x_0$, user request $u=(t,m,p,r)$, base diffusion model, steps $T$, noising level $t_0$, guidance scale $s$
			\Ensure Edited image $\hat{x}_0$
			\State Compute conditioning $c \leftarrow \mathrm{Encode}(t,r)$
			\State Invert: estimate $\tilde{x}_{t_0} \leftarrow \mathrm{Invert}(x_0, c_{\mathrm{id}}, t_0)$
			\State Initialize $x_{t_0} \leftarrow \tilde{x}_{t_0}$
			\For{$t = t_0, t_0-1, \dots, 1$}
			\State Predict unconditional noise $\epsilon_u \leftarrow \epsilon_\theta(x_t, t, \varnothing)$
			\State Predict conditional noise $\epsilon_c \leftarrow \epsilon_\theta(x_t, t, c)$
			\State Guided prediction $\epsilon_{\mathrm{cfg}} \leftarrow \epsilon_u + s(\epsilon_c - \epsilon_u)$
			\State Reverse step $x_{t-1} \leftarrow \mathrm{Step}(x_t, \epsilon_{\mathrm{cfg}}, t)$
			\If{mask $m$ is provided}
			\State Apply locality constraint: $x_{t-1} \leftarrow m \odot x_{t-1} + (1-m)\odot \tilde{x}_{t-1}$
			\EndIf
			\EndFor
			\State Decode $\hat{x}_0 \leftarrow \mathrm{Decode}(x_0)$
			\State \Return $\hat{x}_0$
		\end{algorithmic}
	\end{algorithm}
	
	\begin{algorithm}[t]
		\caption{Mask-Localized Guidance with Soft and Hard Options}
		\label{alg:mask_guidance}
		\begin{algorithmic}[1]
			\Require Current latent $x_t$, mask $m$, guidance update $\Delta_t$, anchor latent $\tilde{x}_t$, mode $\in \{\mathrm{soft},\mathrm{hard}\}$
			\Ensure Updated latent $x_t'$
			\If{mode $=$ soft}
			\State $x_t' \leftarrow x_t + m \odot \Delta_t$
			\State Add preservation regularizer to optimization objective: $\norm{(1-m)\odot(\hat{x}_0-x_0)}_2^2$
			\Else
			\State $x_t' \leftarrow m \odot (x_t + \Delta_t) + (1-m)\odot \tilde{x}_t$
			\EndIf
			\State \Return $x_t'$
		\end{algorithmic}
	\end{algorithm}
	
	\begin{algorithm}[t]
		\caption{Drag-based Constraint Optimization in Latent Space}
		\label{alg:drag}
		\begin{algorithmic}[1]
			\Require Input image $x_0$, handle/target pairs $\{(h_k,g_k)\}_{k=1}^K$, optional region transforms, steps $T$, optimization iterations $N$
			\Ensure Edited image $\hat{x}_0$
			\State Initialize latent trajectory by inversion: $\tilde{x}_{t_0} \leftarrow \mathrm{Invert}(x_0, c_{\mathrm{id}}, t_0)$
			\State Initialize optimized variables $\xi \leftarrow \tilde{x}_{t_0}$
			\For{$n=1$ to $N$}
			\State Run partial denoising with current $\xi$ to obtain provisional $\hat{x}_0(\xi)$
			\State Compute motion loss $\mathcal{L}_{\mathrm{drag}} \leftarrow \sum_k \norm{\phi(\hat{x}_0(\xi);h_k)-\phi(\hat{x}_0(\xi);g_k)}_2^2$
			\State Compute preservation loss $\mathcal{L}_{\mathrm{pres}} \leftarrow \norm{(1-m)\odot(\hat{x}_0(\xi)-x_0)}_2^2$
			\State Compute manifold regularizer $\mathcal{L}_{\mathrm{man}} \leftarrow \norm{\xi-\tilde{x}_{t_0}}_2^2$
			\State Update $\xi \leftarrow \xi - \eta \nabla_\xi (\mathcal{L}_{\mathrm{drag}} + \beta \mathcal{L}_{\mathrm{pres}} + \gamma \mathcal{L}_{\mathrm{man}})$
			\EndFor
			\State Decode final edited image $\hat{x}_0 \leftarrow \mathrm{DenoiseDecode}(\xi)$
			\State \Return $\hat{x}_0$
		\end{algorithmic}
	\end{algorithm}
	
	\begin{algorithm}[t]
		\caption{Iterative Instruction-Following Editing with Stability Tracking}
		\label{alg:iterative}
		\begin{algorithmic}[1]
			\Require Initial image $x_0^{(0)}$, instruction sequence $\{t^{(i)}\}_{i=0}^{K-1}$, optional masks $\{m^{(i)}\}$, base editor $\mathcal{E}$, stability threshold $\tau$
			\Ensure Final image $x_0^{(K)}$
			\For{$i=0$ to $K-1$}
			\State Propose edit $x_0^{(i+1)} \leftarrow \mathcal{E}(x_0^{(i)}, (t^{(i)}, m^{(i)}, \varnothing, \varnothing))$
			\State Compute stability proxy $s^{(i)} \leftarrow \mathrm{Sim}(x_0^{(i+1)}, x_0^{(i)})$ on non-target regions
			\If{$s^{(i)} < \tau$}
			\State Reduce edit aggressiveness (e.g., decrease guidance, decrease noise level, increase preservation weight)
			\State Re-run $\mathcal{E}$ with updated hyperparameters to obtain revised $x_0^{(i+1)}$
			\EndIf
			\EndFor
			\State \Return $x_0^{(K)}$
		\end{algorithmic}
	\end{algorithm}

\section{Additional Background}
With the advancement of deep learning and modern generative modeling, research has expanded rapidly across forecasting, perception, and visual generation, while also raising new concerns about controllability and responsible deployment. 
Progress in time-series forecasting has been driven by stronger benchmarks, improved architectures, and more comprehensive evaluation protocols that make model comparisons more reliable and informative~\cite{qiu2024tfb,qiu2025duet,qiu2025DBLoss,qiu2025dag,qiu2025tab,wu2025k2vae,liu2025rethinking,qiu2025comprehensive,wu2024catch}. 
In parallel, efficiency-oriented research has pushed post-training quantization and practical compression techniques for 3D perception pipelines, aiming to reduce memory and latency without sacrificing detection quality~\cite{gsq,yu2025mquant,zhou2024lidarptq,pillarhist}. 
On the generation side, a growing body of work studies scalable synthesis and optimization strategies under diverse constraints, improving both the flexibility and the controllability of generative systems~\cite{xie2025chat,xie2026hvd,xie2026conquer,xie2026delving}. 
Complementary advances have also been reported across multiple generative and representation-learning directions, further broadening the toolbox for building high-capacity models and training objectives~\cite{1,2,3,4,5,6,7,8}. 
For domain-oriented temporal prediction, hierarchical designs and adaptation strategies have been explored to improve robustness under distribution shifts and complex real-world dynamics~\cite{sun2025ppgf,sun2024tfps,sun2025hierarchical,sun2022accurate,sun2021solar,niulangtime,sun2025adapting,kudratpatch}. 
Meanwhile, advances in representation encoding and matching have introduced stronger alignment and correspondence mechanisms that benefit fine-grained retrieval and similarity-based reasoning~\cite{ENCODER,FineCIR,OFFSET,HUD,PAIR,MEDIAN}. 
Stronger visual modeling strategies further enhance feature quality and transferability, enabling more robust downstream understanding in diverse scenarios~\cite{yu2025visual}. 
In tracking and sequential visual understanding, online learning and decoupled formulations have been investigated to improve temporal consistency and robustness in dynamic scenes~\cite{zheng2025towards,zheng2024odtrack,zheng2025decoupled,zheng2023toward,zheng2022leveraging}. 
Low-level vision has also progressed toward high-fidelity restoration and enhancement, spanning super-resolution, brightness/quality control, lightweight designs, and practical evaluation settings, while increasingly integrating powerful generative priors~\cite{xu2025fast,fang2026depth,wu2025hunyuanvideo,li2023ntire,ren2024ninth,wang2025ntire,peng2020cumulative,wang2023decoupling,peng2024lightweight,peng2024towards,wang2023brightness,peng2021ensemble,ren2024ultrapixel,yan2025textual,peng2024efficient,conde2024real,peng2025directing,peng2025pixel,peng2025boosting,he2024latent,di2025qmambabsr,peng2024unveiling,he2024dual,he2024multi,pan2025enhance,wu2025dropout,jiang2024dalpsr,ignatov2025rgb,du2024fc3dnet,jin2024mipi,sun2024beyond,qi2025data,feng2025pmq,xia2024s3mamba,pengboosting,suntext,yakovenko2025aim,xu2025camel,wu2025robustgs,zhang2025vividface}. 
Robust vision modeling under adverse conditions has been actively studied to handle complex degradations and improve stability in challenging real-world environments~\cite{wu2024rainmamba,wu2023mask,wu2024semi,wu2025samvsr}. 

Recent research has advanced learning and interaction systems across education, human-computer interfaces, and multimodal perception. In knowledge tracing, contrastive cross-course transfer guided by concept graphs provides a principled way to share knowledge across related curricula and improve student modeling under sparse supervision~\cite{han2025contrastive}. In parallel, foundational GUI agents are emerging with stronger perception and long-horizon planning, enabling robust interaction with complex interfaces and multi-step tasks~\cite{zeng2025uitron}. Extending this direction to more natural human inputs, speech-instructed GUI agents aim to execute GUI operations directly from spoken commands, moving toward automated assistance in hands-free or accessibility-focused settings~\cite{han2025uitron}. Beyond interface agents, reference-guided identity preservation has been explored to better maintain subject consistency in face video restoration, improving temporal coherence and visual fidelity when restoring degraded videos~\cite{han2025show}. Finally, large-scale egocentric datasets that emphasize embodied emotion provide valuable supervision for studying affective cues from first-person perspectives and support more human-centered multimodal understanding~\cite{feng20243}.

\paragraph{Broader Outlook Across AI Domains.}
Beyond the main problem setting studied in this paper, we believe that the underlying principles explored here---such as robust representation learning, cross-modal alignment, structured reasoning, and trustworthy deployment---have broader relevance across modern AI. In particular, related efforts have extended to time-series anomaly detection, multimodal intent understanding, graph-based fraud detection, lane topology reasoning and object detection for autonomous driving, human--scene contact perception, speech deepfake detection, optimization methodology, as well as broader questions of fairness, ethics, and governance in AI, suggesting that many of the design intuitions in this work may generalize naturally to a wider range of vision, multimodal, and algorithmic problems \cite{zhang2025frect,zhang2025dconad,shen2025mess,zhang2025dual,li2025reusing,lin2025butter,lin2026graphicontact,xuan2025wavesp,lin5074292insertion,lin2024comprehensive,jiang2025never}.

{
    \small
    \bibliographystyle{ieeenat_fullname}
    \bibliography{main}
}


\end{document}